\newcommand{\overbar}[1]{\mkern 1.5mu\overline{\mkern-1.5mu#1\mkern-1.5mu}\mkern 1.5mu}
\let\oldsqrt\sqrt
\def\sqrt{\mathpalette\DHLhksqrt}
\def\DHLhksqrt#1#2{%
\setbox0=\hbox{$#1\oldsqrt{#2\,}$}\dimen0=\ht0
\advance\dimen0-0.2\ht0
\setbox2=\hbox{\vrule height\ht0 depth -\dimen0}
{\box0\lower0.4pt\box2}}
\title{\bf{Non-manifest symmetries in quantum field theory}} 
\author{\large{Peter Lowdon} \\
\ \\
\textit{\small{Physik-Institut, Universit\"at Z\"urich, Winterthurerstrasse 190, 8057 Z\"urich, Switzerland}} \\
\textit{\small{E-mail: lowdon@physik.uzh.ch}}}
\date{}
\begin{document}
\begin{flushright} ZU--TH 32/15   \end{flushright}
\vspace{15mm} 
{\let\newpage\relax\maketitle}
\setcounter{page}{1}
\pagestyle{plain}

\maketitle
\setcounter{page}{1}
\pagestyle{plain}
 
\abstract \noindent 
Non-manifest symmetries are an important feature of quantum field theories (QFTs), and yet their characteristics are not fully understood. In particular, the construction of the charge operators associated with these symmetries is ambiguous. In this paper we adopt a rigorous axiomatic approach in order to address this issue. It turns out that charge operators of non-manifest symmetries are not unique, and that although this does not affect their property as generators of the corresponding symmetry transformations, additional physical input is required in order to determine how they act on states. Applying these results to the examples of spacetime translation and Lorentz symmetry, it follows that the assumption that the vacuum is the unique translationally invariant state is sufficient to uniquely define the charges associated with these symmetries. In the case of supersymmetry though there exists no such physical requirement, and this therefore implies that the supersymmetric charge, and hence the supersymmetric space of states, is not uniquely defined.

\newpage

\newtheorem{mydef}{Definition}
\newtheorem{theorem}{Theorem}
\newtheorem{lemma}{Lemma}
\newtheorem{corr}{Corollary}
\newcommand{\lpipe}{\rule[-0.4ex]{0.90pt}{2.3ex}}
\setlength\fboxrule{0.5mm}
\setlength\fboxsep{4mm}

\section{Introduction}
\label{section1}

Non-manifest symmetry is an important feature of quantum field theory (QFT), and yet the quantisation of these types of symmetries is still not fully understood. In particular, the question of which features of classical non-manifest symmetries survive quantisation remains unresolved. Given a classical field theory with fields $\{\varphi_{a}\}$ and Lagrangian density $\mathcal{L}$, a non-manifest symmetry is a symmetry for which $\delta \mathcal{L} = \partial_{\mu}\mathcal{B}^{\mu}$. The corresponding Noether current\footnote{The Noether current may also have additional indices other than $\mu$, but here we will use $j^{\mu}$ for simplicity.} has the form:
\begin{align}
j^{\mu} = \frac{\partial \mathcal{L}}{\partial(\partial_{\mu}\varphi_{a})}\delta\varphi_{a} - \mathcal{B}^{\mu}
\label{nm_current}
\end{align} 
and its conservation $\partial_{\mu}j^{\mu}=0$ follows from the Euler-Lagrange equations. Manifest symmetries are defined by the fact that $\mathcal{B}^{\mu}\equiv 0$, and so the current is canonically determined~\cite{QFT_Maths99}. But for non-manifest symmetries $\mathcal{B}^{\mu}$ is not defined to vanish, which means that there is the freedom to redefine $\mathcal{B}^{\mu}$ by performing the following \textit{improvement} transformation\footnote{$\partial_{\nu}\widetilde{\mathcal{B}}^{[\mu\nu]}$ is referred to as an \textit{improvement term}~\cite{Shifman12}, and $[\mu\nu]$ indicates that the indices $\mu$ and $\nu$ are anti-symmetric.}:
\begin{align}
\mathcal{B}^{\mu}  \longrightarrow \mathcal{B}^{\mu} + \partial_{\nu}\widetilde{\mathcal{B}}^{[\mu\nu]} 
\end{align} 
without changing $\delta \mathcal{L}$ or affecting the conservation of the current. However, because the current $j^{\mu}$ \textit{is} modified, this freedom implies an ambiguity in the definition of $j^{\mu}$~\cite{QFT_Maths99}. Since the charges associated with these currents are defined by $Q= \int d^{3}x \, j^{0}(x)$, improvement terms in the current become spatial boundary terms $\int d^{3}x \, \partial_{i}\widetilde{\mathcal{B}}^{[0i]}$ in the charges. So if two currents $j^{\mu}$ and $\tilde{j}^{\mu}$ differ by an improvement term, it follows from Stokes' theorem that the corresponding charges $Q$ and $\widetilde{Q}$ may be equal if the fields satisfy certain boundary conditions, such as the requirement to vanish at spatial infinity. \\

\noindent
A significant difference between classical and quantum field theories is that quantised fields are operator-valued distributions, and not functions. An important consequence of this is that quantised fields are not point-wise defined~\cite{Streater_Wightman64}. The physical motivation for this feature is that because operators inherently imply a measurement, if a field $\varphi(x)$ \textit{were} a well-defined operator then this would represent the performance of a measurement at a single spacetime point $x$. However, quantum mechanically this is not possible since this would require an infinite amount of energy~\cite{Haag96}. Instead, one can perform a measurement over a spacetime region $\mathcal{U}$ and model this with the operator $\varphi(f) := \int d^{4}x \ \varphi(x)f(x)$, which consists of a distribution $\varphi$ smeared with some test function $f$ with support in $\mathcal{U}$. Quantised fields being operator-valued distributions is one of a series of axioms which are employed in axiomatic approaches to QFT. Although different axiomatic schemes have been proposed, these schemes generally consist of a common core set of axioms including, for example, local (anti-)commutativity and the uniqueness of the vacuum\footnote{See e.g.~\cite{Streater_Wightman64,Haag96,Nakanishi_Ojima90} for a more in-depth discussion.}. The requirement that these axioms are compatible with the definition of a quantised field is also another strong motivation for why these fields are distributions as opposed to functions\footnote{Assuming fields $\varphi(x)$ are operator-valued functions, and combining this with certain standard QFT axioms, implies $\exists c \in \mathbb{C}$ such that: $\varphi(x)|0\rangle = c|0\rangle \ \forall x$, and hence the fields cannot be non-trivial~\cite{Butterfield_Earman07}.}. Because of the distributional behaviour of quantised fields, the arguments used to determine the characteristics of classical non-manifest symmetries are generally no longer valid for QFTs. In particular, since the spatial boundary terms $\int d^{3}x \, \partial_{i}\widetilde{\mathcal{B}}^{[0i]}$ are operators in the quantised theory, and imposing boundary conditions on quantised fields is ill-defined~\cite{Lowdon14}, the classical reasoning used to justify when these terms vanish does not apply. Nevertheless, these arguments are still often cited to explain the vanishing of these terms in QFT~\cite{Weinberg95, Mandl_Shaw10, Greiner_Reinhardt96}.  \\

\noindent
The remainder of this paper is structured as follows: in section~\ref{section2} we analyse the quantisation and general characteristics of non-manifest symmetries in QFTs; in section~\ref{section3} we apply these findings to specific examples of these symmetries; and finally in section~\ref{section4} we summarise our results.

\section{Quantum non-manifest symmetries}
\label{section2}

\subsection{Quantisation}
\label{nm_quant}

As outlined in section~\ref{section1}, spatial boundary operators are an important feature of quantum non-manifest symmetries. In classical field theories, these spatial boundary terms have the form $B = \int d^{3}x \ \partial_{i}B^{i}(x)$, where $B^{i}(x)$ is some field. However, since \textit{quantised} fields are distributions, the quantum analogue of $B$ must involve a smearing with test functions in order to yield a well-defined operator. Moreover, it is necessary that this smearing is performed both in space \textit{and} time, since $\partial_{i}B^{i}(x)$ is in general not defined at sharp times~\cite{Strocchi13}. The spatial boundary operator $B$ can thus be written: $B = \int d^{4}x \ f(x) \partial_{i}B^{i}(x) =  \partial_{i}B^{i}(f)$, where $f$ is some test function on $\mathbb{R}^{1,3}$. As discussed in section~\ref{section1}, the operator $B = \partial_{i}B^{i}(f)$ represents the performance of a measurement on the spacetime region $\text{supp}(f) \subset \mathbb{R}^{1,3}$. Therefore, in order to ensure that $B$ agrees with the classically-motivated form for $B$, and is defined on the whole of space, one can choose without loss of generality~\cite{Lowdon14} that $f(x):=\alpha(x_{0})f_{R}({\bf{x}})$, where $\alpha \in \mathcal{D}(\mathbb{R})$ ($\text{supp}\left(\alpha\right) \subset [-\delta,\delta]$, $\delta >0$) and $f_{R} \in \mathcal{D}(\mathbb{R}^{3})$ have the following properties:
\begin{align}
\int dx_{0} \ \alpha(x_{0}) = 1, \hspace{10mm}
   f_{R}({\bf{x}}) = \left\{
     \begin{array}{ll}
       1, & |{\bf{x}}| < R \\
       0, & |{\bf{x}}| > R(1+\varepsilon)
     \end{array}
     \right.
\label{test_f}
\end{align}
with $\varepsilon >0$. Hence the spatial boundary operator $B$ has the explicit form:
\begin{align}
B = \lim_{R \rightarrow \infty} \int d^{4}x \ \alpha(x_{0})f_{R}({\bf{x}}) \, \partial_{i}B^{i}(x) \,.
\label{B_op}
\end{align}
As well as spatial boundary operators, this same class of test functions can also be used to rigorously define the quantum variation $\delta F$ of a (smeared) field operator $F$ under a symmetry transformation. Given that the symmetry gives rise to the conserved Noether current $j^{\mu}$, $\delta F$ is defined as follows~\cite{Kastler_Robinson_Swieca66, Ferrari_Picasso_Strocchi77, Kugo_Ojima79}:
\begin{align}
\delta F = i\left[Q, F \right]_{\pm} := \lim_{R\rightarrow \infty} i\left[Q_{R}, F \right]_{\pm} = \lim_{R\rightarrow \infty} i\left[j^{0}(\alpha f_{R}), F \right]_{\pm}
\label{var_def}
\end{align}   
where $Q_{R}$ is a localised expression for the charge generator $Q$ of the symmetry, and $\left[\cdot,\cdot \right]_{\pm}$ is either an anti-commutator or commutator depending on the spin of $Q$ and $F$. \\

\noindent
It turns out that the definitions for both $B$ and $\delta F$ are not completely sufficient to ensure that these operators are always well-defined. One also requires that the algebra of fields $\mathcal{F}$ in the theory is \textit{local}, which means that for any fields $\phi,\psi \in \mathcal{F}$, one has that $[\phi(f),\psi(g)]_{\pm}=0$ when the supports of the test functions $f$ and $g$ are space-like separated\footnote{This property is called local (anti-)commutativity.}. However, the locality of $\mathcal{F}$ is not guaranteed for all classes of QFTs. In particular, for gauge theories it transpires that the gauge symmetry of the theory implies a strengthened form of the Noether current conservation condition, and this leads to the possibility of non-local fields~\cite{Strocchi13}. A prominent example of this is quantum electrodynamics (QED) in the Coulomb gauge, where all the \textit{charged} fields are non-local~\cite{Strocchi_Wightman74}. Nevertheless, it turns out the locality of the field algebra can be preserved, and this can be achieved by adopting a so-called \textit{local quantisation}~\cite{Strocchi13}. In local quantisations, additional degrees of freedom are introduced into the theory, resulting in an extension of the space of states $\mathcal{V}$. However, an important consequence of this extension is that the inner product in $\mathcal{V}$ is no longer positive definite. So the locality of $\mathcal{F}$ is preserved at the expense of violating the positivity of the inner product in $\mathcal{V}$. Since negative norm states are unphysical, one must therefore introduce a condition in order to determine the physical states $\mathcal{V}_{\text{phys}} \subset \mathcal{V}$. For Yang-Mills theories, \textit{BRST quantisation} is an important example of a local quantisation. In this case ghost and gauge-fixing degrees of freedom are added to the theory in order to break the gauge invariance, and preserve the locality of $\mathcal{F}$. Although the gauge-fixed theory is no longer gauge invariant, it remains invariant under a residual \textit{BRST} symmetry, with a corresponding conserved charge $Q_{B}$. Physical states are specified by the \textit{subsidiary} condition: $Q_{B}\mathcal{V}_{\text{phys}}=0$, and the corresponding Hilbert space is defined by $\mathcal{H}:=\overbar{\mathcal{V}_{\text{phys}}\slash \mathcal{V}_{0}}$, where $\mathcal{V}_{0} \subset \mathcal{V}_{\text{phys}}$ contains the zero norm states\footnote{The bar implies that $\mathcal{H}$ also includes certain limit states~\cite{Nakanishi_Ojima90}.}. \\

\noindent
Determining the conditions under which spatial boundary operators vanish is central to understanding the differences between classical and quantum non-manifest symmetries. This issue was first investigated for locally quantised QFTs in~\cite{Lowdon14}, where spatial boundary operators $B$ were rigorously defined as in Eq.~(\ref{B_op}). It was established that if $B$ annihilates the vacuum state, then this is both necessary and sufficient to ensure that $B=0$, and hence one has the following theorem: \\ 
\begin{theorem}
 $\int d^{3}x \ \partial_{i}B^{i}$ \hspace{1mm} vanishes in $\mathcal{V}$ \hspace{3mm} $\Longleftrightarrow$ \hspace{3mm} $\int d^{3}x \ \partial_{i}B^{i} | 0 \rangle =0$
\label{theorem1}
\end{theorem}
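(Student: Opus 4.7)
The plan is to treat the two directions separately. The forward implication ($\Rightarrow$) is immediate: if $B := \int d^{3}x \ \partial_{i}B^{i}$ vanishes on all of $\mathcal{V}$, then in particular $B|0\rangle = 0$. My focus is therefore on the nontrivial converse ($\Leftarrow$), where the strategy is to combine the locality of the field algebra $\mathcal{F}$ with cyclicity of the vacuum.

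First I would rewrite the localised operator $B_R$ from Eq.~(\ref{B_op}) via integration by parts in the spatial variables as
\begin{align*}
B_R = -\int d^{4}x \ \alpha(x_{0})\bigl(\partial_{i}f_{R}\bigr)({\bf x}) \, B^{i}(x),
\end{align*}
so that its smearing function is now concentrated on the annular shell $R \leq |{\bf x}| \leq R(1+\varepsilon)$, which recedes to spatial infinity. For any smeared local field $\phi(g) \in \mathcal{F}$ whose test function $g$ has compact support, choosing $R$ sufficiently large puts the two supports into spacelike separation, so local (anti-)commutativity gives $[B_R,\phi(g)]_{\pm}=0$; passing to the limit yields $[B,\phi(g)]_{\pm}=0$. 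Iterating this for any polynomial $\Phi$ in smeared local fields, one obtains
\begin{align*}
B\,\Phi|0\rangle \;=\; \lim_{R\to\infty} B_R\,\Phi|0\rangle \;=\; \pm\,\Phi \lim_{R\to\infty} B_R|0\rangle \;=\; \pm\,\Phi\, B|0\rangle \;=\; 0,
\end{align*}
where the hypothesis enters in the final equality. Since $|0\rangle$ is cyclic for $\mathcal{F}$, the states $\Phi|0\rangle$ are dense in $\mathcal{V}$, and so $B$ vanishes throughout $\mathcal{V}$.

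The step I expect to be the hardest to justify rigorously is the exchange of the $R\to\infty$ limit with the (generally unbounded) polynomial $\Phi$ in the second display. Strong convergence of $B_R|0\rangle$ to zero does not automatically imply $\Phi B_R|0\rangle \to 0$, and one must lean on the spectral condition together with translational invariance of the vacuum to secure a mode of convergence compatible with applying $\Phi$ termwise. A secondary subtlety, specific to the locally quantised setting where $\mathcal{V}$ carries an indefinite inner product, is that the Reeh--Schlieder-type cyclicity invoked at the end has to be formulated in a version valid in this non-positive framework rather than in the standard Wightman setting.
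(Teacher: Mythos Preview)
Your proposal is correct and matches the approach the paper points to: the paper does not prove Theorem~1 in full but refers to~\cite{Lowdon14} and notes in a footnote that the Reeh--Schlieder theorem is the central ingredient, which is precisely the cyclicity you invoke after the locality argument (and the integration-by-parts/locality step you use appears verbatim in the paper's proof of Theorem~2). You have also correctly anticipated both technical subtleties the paper itself flags---the limit exchange with unbounded field polynomials and the need for the indefinite-metric version of Reeh--Schlieder on $\mathcal{V}$ rather than $\mathcal{H}$.
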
 
\ \\
\noindent
where $\int d^{3}x \ \partial_{i}B^{i}$ implicitly involves the smearing in Eq.~(\ref{B_op}). It should be noted that Theorem~\ref{theorem1} differs slightly to the theorem derived in~\cite{Lowdon14} in that it involves the state space $\mathcal{V}$, as opposed to $\mathcal{H}$. This subtle modification is necessary in order to ensure that the theorem is applicable to arbitrary locally quantised theories\footnote{The requirement for this modification arises because the Reeh-Schlieder theorem, which is central to the proof of Theorem~\ref{theorem1}, holds in $\mathcal{V}$ but may no longer hold in $\mathcal{H}$, as discussed in~\cite{Nakanishi_Ojima90}.}. As already discussed in section~\ref{section1}, given a classical non-manifest symmetry with canonical Noether current $j^{\mu}$, one can modify $j^{\mu}$ by adding an improvement term $\partial_{\nu}\widetilde{\mathcal{B}}^{[\mu\nu]}$ without affecting its overall conservation. The corresponding charges $Q$ and $\widetilde{Q}$ associated with $j^{\mu}$ and $\tilde{j}^{\mu} = j^{\mu} + \partial_{\nu}\widetilde{\mathcal{B}}^{[\mu\nu]}$ will therefore differ by a spatial boundary term, the vanishing of which will depend on the boundary conditions of the (classical) fields. In light of Theorem~\ref{theorem1}, it follows that spatial boundary \textit{operators} are similarly not guaranteed to vanish, and this immediately implies the corollary: \\
\begin{corr}
The charge generator of a quantum non-manifest symmetry is a priori non-unique
\label{nm_corr1}
\end{corr}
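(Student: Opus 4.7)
The plan is to reduce the corollary directly to Theorem~\ref{theorem1} by showing that any two candidate charges associated with the same non-manifest symmetry differ precisely by a spatial boundary operator of the form covered by that theorem, and then to argue that such an operator cannot be guaranteed to vanish without additional input.

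First I would fix a canonical Noether current $j^{\mu}$ of the non-manifest symmetry, together with an arbitrary improvement $\tilde{j}^{\mu} = j^{\mu} + \partial_{\nu}\widetilde{\mathcal{B}}^{[\mu\nu]}$, noting that both currents are conserved and give rise to the same classical infinitesimal variation $\delta\varphi_{a}$. Using the smearing class of test functions $\alpha(x_0) f_R(\mathbf{x})$ from Eq.~(\ref{test_f}), I would then construct the two localised charges $Q_R = j^0(\alpha f_R)$ and $\widetilde{Q}_R = \tilde{j}^0(\alpha f_R)$, and apply Eq.~(\ref{var_def}) to define $Q$ and $\widetilde{Q}$ as the $R\to\infty$ limits in the strong operator sense on the state space $\mathcal{V}$.

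The key computation is then to isolate the difference $\widetilde{Q} - Q$. Since $\widetilde{\mathcal{B}}^{[\mu\nu]}$ is antisymmetric, the component $\widetilde{\mathcal{B}}^{[00]}$ vanishes identically, so that
\begin{align}
\widetilde{Q} - Q = \lim_{R\to\infty} \int d^{4}x\, \alpha(x_0) f_R(\mathbf{x})\, \partial_{i} \widetilde{\mathcal{B}}^{[0i]}(x),
\end{align}
which is exactly the smeared form of a spatial boundary operator $B = \int d^{3}x\, \partial_{i} B^{i}$ in the sense of Eq.~(\ref{B_op}), with $B^i \equiv \widetilde{\mathcal{B}}^{[0i]}$. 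Invoking Theorem~\ref{theorem1} then yields that $\widetilde{Q} - Q$ vanishes on $\mathcal{V}$ if and only if it annihilates the vacuum. Without any further physical requirement linking $\widetilde{\mathcal{B}}^{[\mu\nu]}$ to the vacuum sector, there is no reason for this vacuum condition to hold, and hence $\widetilde{Q} \neq Q$ in general, establishing the a priori non-uniqueness.

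The main obstacle is conceptual rather than computational: one must be careful that the two candidate charges really are on equal footing as generators of the same symmetry, so that the non-uniqueness is genuine rather than a relabelling. I would address this by noting that $j^\mu$ and $\tilde{j}^\mu$ both implement the same $\delta\varphi_a$ at the classical level and both qualify as conserved currents in the quantum theory, so the ambiguity in the improvement term $\widetilde{\mathcal{B}}^{[\mu\nu]}$ propagates unavoidably into the definition of the charge. A secondary subtlety is ensuring that the $R\to\infty$ limit genuinely exists on $\mathcal{V}$ for both smearings; this is precisely what is guaranteed by the locality assumption on $\mathcal{F}$ that was emphasised in the discussion preceding Theorem~\ref{theorem1}, so no additional work is needed beyond citing that framework.
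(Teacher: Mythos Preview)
Your proposal is correct and follows essentially the same route as the paper: the corollary is stated there as an immediate consequence of Theorem~\ref{theorem1}, via the observation that two currents differing by an improvement term yield charges differing by a spatial boundary operator $\int d^{3}x\,\partial_{i}\widetilde{\mathcal{B}}^{[0i]}$, which by Theorem~\ref{theorem1} need not vanish absent a vacuum condition. Your write-up is simply more explicit about the antisymmetry step $\widetilde{\mathcal{B}}^{[00]}=0$ and the role of locality, but the logic is the same.
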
  
\noindent
Corollary~\ref{nm_corr1} runs contrary to the expectation of much of the established literature~\cite{Shifman12, Weinberg95, Mandl_Shaw10, Greiner_Reinhardt96}, and has to our knowledge not been discussed before. This is in part because it is often incorrectly concluded that spatial boundary \textit{operators} can be assumed to vanish by imposing suitable \textit{classical} boundary conditions. The non-uniqueness of a charge operator $Q$ immediately implies that its action on states is ambiguous, since one can in principle always add an improvement term to the current such that the transformed charge $\widetilde{Q}$ is different to $Q$. This therefore provides a non-perturbative obstacle to the consistency of QFTs that are invariant under a non-manifest symmetry. \\  

\noindent
Before discussing the consequences of Corollary~\ref{nm_corr1} for specific examples of quantum non-manifest symmetries, we will first explore the effect that the ambiguity in the charge has on the generation of the transformations associated with these symmetries. By using the definition of the quantum variation $\delta F$ (Eq.~(\ref{var_def})), one has the following theorem:   
\begin{theorem}
If \ $\widetilde{Q} = Q + \int d^{3}x \ \partial_{i}B^{i}$, where $Q$ is a charge operator, then
\begin{align*}
\widetilde{\delta} F:= i[\widetilde{Q}, F ]_{\pm} = i[Q, F ]_{\pm} = \delta F 
\end{align*} 
for all operators $F$ constructed from (local) fields smeared with some test function  
\label{theorem_op_gen}
\end{theorem}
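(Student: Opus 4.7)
The plan is to show that the improvement term $\int d^{3}x\, \partial_{i}B^{i}$ (anti-)commutes with every operator $F$ built by smearing local fields with test functions of compact support, after which bilinearity of $[\cdot,\cdot]_{\pm}$ immediately yields $[\widetilde{Q},F]_{\pm}=[Q,F]_{\pm}$. The technical engine is the combination of the explicit smeared representation in Eq.~(\ref{B_op}) with local (anti-)commutativity of the field algebra $\mathcal{F}$, which is available in the setting of section~\ref{nm_quant}.

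First I would rewrite the regularised boundary operator by integration by parts. Using the test function $\alpha(x_{0})f_{R}(\mathbf{x})$ from Eq.~(\ref{test_f}),
\begin{align*}
\int d^{4}x \, \alpha(x_{0}) f_{R}(\mathbf{x})\, \partial_{i}B^{i}(x)
= -\int d^{4}x \, \alpha(x_{0})\, \partial_{i}f_{R}(\mathbf{x})\, B^{i}(x)
= -B^{i}\!\left(\alpha\, \partial_{i}f_{R}\right).
\end{align*}
By construction the support of $\alpha\,\partial_{i}f_{R}$ is contained in the thin space-time shell $[-\delta,\delta] \times \{\mathbf{x}:R \leq |\mathbf{x}| \leq R(1+\varepsilon)\}$. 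By hypothesis $F$ is built from fields smeared with test functions whose supports lie in some fixed compact region $\mathcal{O}\subset\mathbb{R}^{1,3}$. For $R$ large enough this shell lies outside the causal closure of $\mathcal{O}$, so the two supports are space-like separated, and local (anti-)commutativity of $\mathcal{F}$ forces $[B^{i}(\alpha\,\partial_{i}f_{R}), F]_{\pm} = 0$ for every sufficiently large $R$.

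Since the (anti-)commutator is identically zero for every $R$ beyond some threshold $R_{0}$, the limit $R\to\infty$ in Eq.~(\ref{B_op}) is trivial, and $[\int d^{3}x\, \partial_{i}B^{i}, F]_{\pm} = 0$ in $\mathcal{V}$. Adding this to $i[Q,F]_{\pm}$ yields the theorem. The one subtle point — and the natural worry when manipulating distributional objects — would be interchanging the $R\to\infty$ limit with the bracket, but here this worry evaporates because the bracket already vanishes identically at finite $R$; no limiting argument on operator products is required. The only substantive hypothesis being used is that $B^{i}$ itself is a local field, so that local (anti-)commutativity applies to it: this is exactly the feature guaranteed by working within a local quantisation, and would fail in non-local gauges such as the Coulomb gauge discussed in section~\ref{nm_quant}.
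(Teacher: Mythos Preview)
Your proof is correct and follows essentially the same route as the paper: integrate by parts to transfer $\partial_{i}$ onto $f_{R}$, observe that $\operatorname{supp}(\alpha\,\partial_{i}f_{R})$ lies in a thin shell at radius $\sim R$, and invoke local (anti-)commutativity once that shell is space-like separated from the compact support of $F$. The paper phrases the argument via matrix elements $\langle\Psi|\cdots|\Phi\rangle$ rather than directly as an operator identity, but the substance is identical; your explicit remark that the $R\to\infty$ limit is trivial because the commutator already vanishes for all $R>R_{0}$ is a worthwhile clarification.
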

\begin{proof}
Let $|\Psi\rangle$ and $|\Phi\rangle$ be any arbitrary states, then
\begin{align*}
\langle \Psi | \left( [\widetilde{Q}, F ]_{\pm} - [Q , F ]_{\pm}  \right)|\Phi\rangle &= \langle \Psi | \int d^{3}x  \left[\partial_{i}B^{i}, F \right]_{\pm} |\Phi\rangle \\
&= \langle \Psi |\lim_{R \rightarrow \infty} \int d^{4}x \ \alpha(x_{0})f_{R}({\bf{x}})\left[\partial_{i}B^{i}(x),F \right]_{\pm} |\Phi\rangle\\
&= - \langle \Psi | \lim_{R \rightarrow \infty} \int d^{4}x \ \alpha(x_{0})\left(\partial_{i}f_{R}({\bf{x}})\right)\left[B^{i}(x),F \right]_{\pm} |\Phi\rangle \\
&= - \langle \Psi | \lim_{R \rightarrow \infty} \ \left[B^{i}(\alpha\partial_{i}f_{R}),F \right]_{\pm} |\Phi\rangle =0
\end{align*}
where the vanishing in the last equality follows because  the support of $\alpha \partial_{i}f_{R}$ and the test function in the smearing of $F$ will become space-like separated in the limit $R \rightarrow \infty$, and both $B^{i}$ and $F$ satisfy local (anti-)commutativity.
\end{proof}
\noindent
Theorem~\ref{theorem_op_gen} implies that despite the ambiguity in the definition of the generators of quantum non-manifest symmetries, different charge operators are guaranteed to generate \textit{the same} symmetry transformation. This is in contrast to the classical case, where invariance of the symmetry transformation\footnote{For classical field theories, Poisson brackets are instead used to define the symmetry transformation $\delta F$.} requires that the boundary conditions of the fields must be such that \textit{all} spatial boundary terms are exactly vanishing.  \\

\noindent
Although Theorem~\ref{theorem_op_gen} guarantees that different expressions for the charges of quantum non-manifest symmetries will generate the same transformation,   
Corollary~\ref{nm_corr1}, as outlined previously, implies that the action of charges on states themselves is potentially ambiguous. However, as a consequence of Theorem~\ref{theorem1}, one has the corollary: \\
\begin{corr}
If \ $\widetilde{Q} = Q  + \int d^{3}x \ \partial_{i}B^{i}$ for charge operators $\widetilde{Q}$ and $Q$, then
\begin{align*}
\widetilde{Q}|0\rangle = Q|0\rangle  \hspace{3mm}  \Longleftrightarrow \hspace{3mm} \widetilde{Q}=Q
\end{align*}
\label{Q_vacuum}
\end{corr}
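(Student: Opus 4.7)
The plan is to reduce Corollary~\ref{Q_vacuum} to a direct application of Theorem~\ref{theorem1}, since the difference $\widetilde{Q}-Q$ is by hypothesis exactly the type of spatial boundary operator $\int d^{3}x \, \partial_{i}B^{i}$ (with the smearing of Eq.~(\ref{B_op}) implicit) that Theorem~\ref{theorem1} controls.

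For the reverse implication ($\widetilde{Q}=Q \Rightarrow \widetilde{Q}|0\rangle = Q|0\rangle$) there is nothing to prove: operator equality trivially implies equality when applied to the vacuum. So the entire content lies in the forward direction.

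For the forward direction, I would start from the assumption $\widetilde{Q}|0\rangle = Q|0\rangle$ and rewrite this, using the hypothesis on $\widetilde{Q}$, as
\[
\left(\int d^{3}x \ \partial_{i}B^{i}\right)|0\rangle \;=\; 0 .
\]
By Theorem~\ref{theorem1}, this is both necessary and sufficient for the spatial boundary operator itself to vanish on the whole of $\mathcal{V}$, i.e.\ $\int d^{3}x \, \partial_{i}B^{i} = 0$. Substituting back into $\widetilde{Q} = Q + \int d^{3}x \, \partial_{i}B^{i}$ immediately yields $\widetilde{Q} = Q$ as operators on $\mathcal{V}$, which is what is required.

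There is essentially no obstacle, provided we are careful that the smearing prescription implicit in the $\int d^{3}x\,\partial_{i}B^{i}$ notation is the same one used in Theorem~\ref{theorem1} (namely Eq.~(\ref{B_op})), and that the ambient state space is $\mathcal{V}$ rather than $\mathcal{H}$, so that the Reeh--Schlieder input to Theorem~\ref{theorem1} is available. The only conceptual subtlety worth flagging in the write-up is that, a priori, knowing an operator annihilates a single vector (the vacuum) is much weaker than knowing it vanishes identically; it is precisely Theorem~\ref{theorem1}, via the cyclicity of the vacuum, that upgrades the former to the latter for spatial boundary operators. Hence the corollary is genuinely a statement about the structure of locally quantised QFTs, even though its proof, once Theorem~\ref{theorem1} is in hand, is a one-line substitution.
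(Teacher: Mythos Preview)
Your proposal is correct and matches the paper's own justification exactly: the paper simply states that Corollary~\ref{Q_vacuum} follows immediately from the fact that $\int d^{3}x \, \partial_{i}B^{i}$ vanishes if and only if it annihilates the vacuum, i.e.\ from Theorem~\ref{theorem1}. Your additional remarks about the smearing prescription, the state space $\mathcal{V}$ versus $\mathcal{H}$, and the role of vacuum cyclicity are accurate elaborations of points the paper also makes in the surrounding discussion.
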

\noindent
Corollary~\ref{Q_vacuum} follows immediately from the fact that $\int d^{3}x \ \partial_{i}B^{i}$ vanishes if and only if it annihilates the vacuum state. Therefore, if any two charges that are related by an improvement transformation act in the same manner on the vacuum state, this is both necessary and sufficient to imply that these charges are equal. In other words, knowledge of how the charge operators of non-manifest symmetries act on the vacuum is sufficient to uniquely define them. This means that in contrast to manifest symmetries, where the conserved current and hence the charge $Q$ are canonically determined (i.e. $\mathcal{B}^{\mu} \equiv 0$ in Eq.~(\ref{nm_current})), quantum non-manifest symmetries require additional physical input in order to specify $Q$.

\subsection{Spontaneous symmetry breaking}
\label{ssb_section}

An important feature of any QFT is the phenomenon of spontaneous symmetry breaking (SSB). In general, the criterion for a quantum symmetry to be spontaneously broken can be characterised by the theorem~\cite{Strocchi08}:
\begin{theorem}
A symmetry is spontaneously broken \hspace{2mm} $\Longleftrightarrow$ \hspace{2mm} $\exists \, \varphi \in \mathcal{F}$ \ such that \ $\langle \delta \varphi \rangle \neq 0$ 
\label{ssb_cond} 
\end{theorem}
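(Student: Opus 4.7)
The plan is to reduce the theorem to the statement that SSB is equivalent to non-invariance of the vacuum under the symmetry, since vacuum invariance can be detected through the vanishing of all vacuum expectation values $\langle 0|\delta\varphi|0\rangle$. Written in this way, Theorem~\ref{theorem1} (and its underlying Reeh-Schlieder input) is exactly what is needed to pass from single-field statements to statements on the full state space.

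For the $(\Leftarrow)$ direction I would argue by contraposition. If the symmetry is unbroken it is implementable on $\mathcal{V}$ by a charge $Q$ with $Q|0\rangle=0$, so that $\langle 0|Q$ also vanishes by (anti-)hermiticity. Consequently, for any smeared local field $\varphi$,
\begin{align*}
\langle 0|\delta\varphi|0\rangle = i\langle 0|[Q,\varphi]_\pm|0\rangle = i\langle 0|Q\varphi|0\rangle \mp i\langle 0|\varphi Q|0\rangle = 0,
\end{align*}
so no $\varphi \in \mathcal{F}$ can satisfy $\langle \delta\varphi\rangle\neq 0$. For the $(\Rightarrow)$ direction, assume the symmetry is spontaneously broken and suppose for contradiction that $\langle 0|\delta\varphi|0\rangle=0$ for every $\varphi \in \mathcal{F}$. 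In terms of the regularised charge $Q_R = j^{0}(\alpha f_R)$ appearing in Eq.~(\ref{var_def}), this hypothesis reads $\lim_{R\to\infty}\langle 0|[Q_R,\varphi]_\pm|0\rangle=0$. I would then promote this to arbitrary polynomials $F=\varphi_1(f_1)\cdots\varphi_n(f_n)$ of smeared local fields using the derivation-type identity $\delta(\varphi_1\cdots\varphi_n)=\sum_i \varphi_1\cdots(\delta\varphi_i)\cdots\varphi_n$, combined with local (anti-)commutativity, which displaces $Q_R$ past any field whose test-function support becomes space-like separated from that of $\alpha f_R$ as $R\to\infty$ (exactly the mechanism already used in the proof of Theorem~\ref{theorem_op_gen}). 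Since polynomials of smeared local fields applied to $|0\rangle$ are dense in $\mathcal{V}$ by Reeh-Schlieder, the vanishing of all such matrix elements would force $Q_R|0\rangle\to 0$ weakly, contradicting the assumed non-invariance of the vacuum.

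The main obstacle is the $(\Rightarrow)$ direction, for two related reasons. First, as is well known from Fabri-Picasso, a genuinely broken charge does not exist as an operator on $\mathcal{V}$, so every step must be carried out for $Q_R$ inside matrix elements, with the limit $R\to\infty$ postponed to the end; mixing the two orders would silently re-introduce a nonexistent $Q$. Second, extending the hypothesis from single fields to full polynomials requires combining the derivation property of $\delta$ with local (anti-)commutativity while tracking test-function supports carefully. The Reeh-Schlieder theorem on $\mathcal{V}$, which the paper emphasises just below Theorem~\ref{theorem1} as holding on $\mathcal{V}$ rather than only on $\mathcal{H}$, is precisely the tool that converts the resulting vacuum-sector statements into a statement about all of $\mathcal{V}$, and thus it is the non-trivial ingredient on which the converse direction rests.
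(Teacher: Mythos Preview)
The paper does not prove Theorem~\ref{ssb_cond} at all: it is quoted as an established characterisation and attributed to~\cite{Strocchi08}. There is therefore no in-paper proof to compare your attempt against; your proposal is an independent argument for a result the author simply imports from the literature.

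On the substance of your attempt: the argument is circular unless you first fix a \emph{definition} of spontaneous symmetry breaking that is logically independent of the order-parameter condition $\exists\,\varphi:\langle\delta\varphi\rangle\neq 0$. You take ``unbroken'' to mean ``$Q$ exists with $Q|0\rangle=0$'' and ``broken'' to mean ``non-invariance of the vacuum'', but in the axiomatic setting used by Strocchi the primary definition is Wigner non-implementability of the symmetry automorphism on the GNS representation, and the content of the theorem is precisely that this is equivalent to the existence of a nonzero order parameter. Your $(\Leftarrow)$ direction then amounts to showing that a Wigner-implemented symmetry with invariant vacuum kills all $\langle\delta\varphi\rangle$, which is fine; but your $(\Rightarrow)$ direction tries to deduce from $\langle\delta\varphi\rangle=0$ for all single fields that $Q_R|0\rangle\to 0$ weakly, and here the derivation-property step does not do what you claim: knowing $\langle 0|\delta\varphi_i|0\rangle=0$ for each factor does not by itself give $\langle 0|\varphi_1\cdots(\delta\varphi_i)\cdots\varphi_n|0\rangle=0$, since the neighbouring fields produce non-vacuum intermediate states. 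The genuine argument (as in Strocchi) instead uses the order-parameter hypothesis together with current conservation and locality to construct the unitary implementer directly, rather than trying to show $Q_R|0\rangle\to 0$.
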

\noindent
where $\mathcal{F}$ is the (local) space of fields in the theory, $\delta$ is the symmetry variation defined in Eq.~(\ref{var_def}), and $\langle \cdot \rangle$ is the vacuum expectation value. In light of Corollary~\ref{nm_corr1}, it is important to establish whether there is a subsequent ambiguity in determining whether a non-manifest symmetry is spontaneously broken or not. As a consequence of Theorems~\ref{theorem_op_gen} and~\ref{ssb_cond}, one has the following corollary:
\begin{corr}
The criterion for a non-manifest symmetry to be spontaneously broken is independent of the choice of charge
\label{SSB_inv}
\end{corr}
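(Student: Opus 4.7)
The plan is to reduce Corollary~\ref{SSB_inv} to a direct combination of Theorem~\ref{theorem_op_gen} and Theorem~\ref{ssb_cond}. By Corollary~\ref{nm_corr1}, any two charge operators $Q$ and $\widetilde{Q}$ for the same non-manifest symmetry differ only by a spatial boundary operator of the form $\int d^{3}x \, \partial_{i}B^{i}$ (arising from an improvement transformation of the underlying Noether current). So the content of the corollary is that replacing $Q$ by $\widetilde{Q} = Q + \int d^{3}x \, \partial_{i}B^{i}$ does not change whether the SSB criterion is met.

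First I would fix an arbitrary local field $\varphi \in \mathcal{F}$, noting that by construction $\varphi$ is obtained by smearing with a test function, which is exactly the class of operators $F$ covered by Theorem~\ref{theorem_op_gen}. Applying that theorem gives $\widetilde{\delta}\varphi = i[\widetilde{Q},\varphi]_{\pm} = i[Q,\varphi]_{\pm} = \delta\varphi$ as operator identities on the full state space. Taking vacuum expectation values then yields $\langle \widetilde{\delta}\varphi \rangle = \langle \delta\varphi \rangle$ for every $\varphi \in \mathcal{F}$.

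Finally, invoking Theorem~\ref{ssb_cond} in both directions, the existence of some $\varphi \in \mathcal{F}$ with $\langle \widetilde{\delta}\varphi\rangle \neq 0$ is equivalent to the existence of some $\varphi \in \mathcal{F}$ with $\langle \delta\varphi\rangle \neq 0$, so the two charges yield the same verdict on spontaneous breaking. There is no genuine obstacle here: the only point requiring care is to confirm that the hypothesis of Theorem~\ref{theorem_op_gen} (namely that $F$ is a smeared local field) is precisely the hypothesis under which Theorem~\ref{ssb_cond} is formulated, so the two results compose without any additional assumption on $\mathcal{F}$ or on the improvement term $B^{i}$.
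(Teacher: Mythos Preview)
Your proposal is correct and matches the paper's own reasoning: the paper does not give a separate proof but simply states that Corollary~\ref{SSB_inv} is ``a consequence of Theorems~\ref{theorem_op_gen} and~\ref{ssb_cond}'', which is precisely the composition you spell out. Your added remark that the class of operators in Theorem~\ref{theorem_op_gen} coincides with the local field algebra $\mathcal{F}$ appearing in Theorem~\ref{ssb_cond} is the only nontrivial verification needed, and you have identified it correctly.
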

\noindent
This means that although the charges $Q$ of non-manifest symmetries are in general not unique, since it is always possible to perform an improvement transformation $Q \rightarrow \widetilde{Q}$, one can equally use any of these charges to establish whether SSB occurs (via Theorem~\ref{ssb_cond}) without ambiguity. \\

\noindent    
Often SSB is characterised by the action of the charge $Q$ on the vacuum state, and in particular that SSB occurs if and only if $Q|0\rangle \neq 0$~\cite{Weinberg00,West86}. However, the problem with this condition is that unlike Theorem~\ref{ssb_cond}, the action of $Q$ on the vacuum state is \textit{not} invariant under improvement transformations $Q \rightarrow \widetilde{Q}$, and is therefore ambiguous for non-manifest symmetries. In fact, due to Corollary~\ref{Q_vacuum}, knowledge of $Q|0\rangle$ is required in order to uniquely define $Q$ in the first place. Therefore, if it were true that SSB could be solely characterised by $Q|0\rangle$, then this would mean that the physical input required to define $Q$ would automatically \textit{also} determine whether the symmetry is spontaneously broken or not. But this cannot be the case, because this would imply that every non-manifest symmetry could only either always be broken or always unbroken, but not both.

\section{Examples of quantum non-manifest symmetries}
\label{section3}

As outlined in section~\ref{section2}, quantum non-manifest symmetries have many interesting and subtle features. In this section we will discuss these features in the context of some prominent examples of these symmetries.

\subsection{Translational invariance}
\label{em_nm}

The invariance of a QFT under spacetime translations is an important example of a quantum non-manifest symmetry. In this case, the conserved current is the energy-momentum tensor $T^{\mu\nu}$ and the corresponding charge is the energy-momentum operator $P^{\mu}$. By applying the results of section~\ref{section2}, and in particular Theorem~\ref{theorem1} and Corollary~\ref{nm_corr1}, it follows that $T^{\mu\nu}$, and hence $P^{\mu}$, are ambiguously defined. Nevertheless, it is frequently cited in the literature~\cite{Shifman12,Weinberg95,Greiner_Reinhardt96} that one can always add an improvement term to $T^{\mu\nu}$ without modifying the corresponding charge. A prominent example of this is the symmetric Belinfante energy-momentum tensor $T^{\mu\nu}_{B}$ and the canonical current $T^{\mu\nu}_{c}$, which are related (by an improvement term) as follows~\cite{Belinfante40}:
\begin{align}
T^{\mu\nu}_{B} = T^{\mu\nu}_{c} + \partial_{\rho}G^{[\mu\rho]\nu} \,.
\end{align}
Now although Theorem~\ref{theorem_op_gen} implies that the corresponding charges $P^{\mu}_{B}$ and $P^{\mu}_{c}$ are both generators of translations, it is not necessarily the case that these operators act in the same manner on states\footnote{This potential difference between $P^{\mu}_{B}$ and $P^{\mu}_{c}$ has been largely ignored in the literature, but has been emphasised before in~\cite{Leader_Lorce14}.}. This in fact highlights a deeper problem -- \textit{how does one determine which energy-momentum operator is correct?} It is clear that in order to answer this question one requires additional physical information\footnote{It is sometimes concluded that the Belinfante generator is more physically motivated because $T^{\mu\nu}$ is symmetric when defined as the variational derivative of the action with respect to the metric in General Relativity. However, as pointed out in~\cite{Leader_Lorce14}, $T^{\mu\nu}$ need not be symmetric if one loosens the requirement that $g^{\mu\nu}$ is symmetric and covariantly constant ($\nabla_{\sigma}g^{\mu\nu}=0$), as is the case in Einstein-Cartan theory.}. In QFTs, the energy-momentum operator $P^{\mu}$ plays a special role in characterising the vacuum state $|0\rangle$. In particular, axiomatic formulations of QFT assume $|0\rangle$ to be the unique translationally invariant state~\cite{Streater_Wightman64,Haag96,Nakanishi_Ojima90}, which means it satisfies the condition: $P^{\mu}|0\rangle=0$. Since Corollary~\ref{Q_vacuum} implies that $P^{\mu}$ is uniquely defined by its action on $|0\rangle$, this condition provides a solution to the problem of which energy-momentum operator is physically relevant. In other words, if one can demonstrate that a certain expression for $P^{\mu}$ (e.g. $P^{\mu}_{B}$ or $P^{\mu}_{c}$) annihilates the vacuum, then this is sufficient to prove that this is the only charge that satisfies this property, and must therefore be \textit{the} physical energy-momentum operator.

\subsection{Lorentz invariance}

Invariance under Lorentz transformations is another example of a non-manifest symmetry. The conserved current is $M^{\mu\nu\lambda}$, and the corresponding charge is $M^{\mu\nu}$. Just like with the energy-momentum tensor, one has both Belinfante $M^{\mu\nu\lambda}_{B} = x^{\nu}T^{\mu\lambda}_{B} - x^{\lambda}T^{\mu\nu}_{B}$ and canonical $M^{\mu\nu\lambda}_{c}$ currents which are both conserved, and differ by an improvement term. Similarly, it follows from the conclusions in section~\ref{section2} that the charges $M^{\mu\nu}_{B}$ and $M^{\mu\nu}_{c}$ are both generators of Lorentz transformations, but are not necessarily the same operator. This again raises the same problem of how to establish which charge is physically relevant. As discussed in section~\ref{em_nm}, axiomatic formulations of QFT assume that the vacuum state $|0\rangle$ is the unique translationally invariant state. It transpires that this assumption implies that $|0\rangle$ is \textit{also} invariant under Lorentz transformations~\cite{Strocchi13}, and hence $M^{\mu\nu}|0\rangle =0$. Due to Corollary~\ref{Q_vacuum}, this physical condition therefore provides a way in which $M^{\mu\nu}$ can be uniquely determined. It should be noted that although the conditions $P^{\mu}|0\rangle=0$ and $M^{\mu\nu}|0\rangle =0$ appear relatively simple, their verification is not necessarily straight-forward, especially in QFTs such as quantum chromodynamics (QCD) where the vacuum state has a non-trivial structure~\cite{Yndurain99}. Nevertheless, in principle these conditions could be verified using a non-perturbative approach such as lattice QFT. \\
  
\noindent
Much of the discussion in the literature regarding the current $M^{\mu\nu\lambda}$ centres around the construction of angular momentum operators $J^{i} = \frac{1}{2}  \epsilon^{ijk}\int d^{3}x \, M^{0jk}$. In particular, an open problem in QCD which has received both significant theoretical and experimental focus, is the question of whether the angular momentum operator $J_{\text{QCD}}$ has a meaningful decomposition into separate quark and gluon contributions~\cite{Leader_Lorce14,Jaffe_Manohar90,Wakamatsu14,Ji97}. There are many different proposed decompositions of $J_{\text{QCD}}$, but they all have in common the fact that they are constructed by adding improvement terms to the canonical $M^{\mu\nu\lambda}_{c}$ or Belinfante $M^{\mu\nu\lambda}_{B}$ Lorentz currents. Although it remains uncertain which (if any) of these decompositions is physically meaningful~\cite{Lowdon14,Leader_Lorce14}, this is directly related to issue of whether certain spatial boundary operators vanish or not. Ultimately, if the physical Lorentz charge $M^{\mu\nu}$ (where $M^{\mu\nu}|0\rangle =0$) could be determined, this would be a significant step in answering this question.  

\subsection{Supersymmetry}

Supersymmetry corresponds to an enlargement of the Poincar\'{e} group of spacetime symmetries, and is another prominent example of a non-manifest symmetry. Invariance under supersymmetric transformations implies a conserved current $S^{\mu}_{\alpha}$, which gives rise to a spinor-valued charge $Q_{\alpha}$. An important feature of supersymmetric QFTs is that unlike the operators $P^{\mu}$ and $M^{\mu\nu}$ in non-supersymmetric theories, there is no equivalent physical requirement as to how $Q_{\alpha}$ (or $Q_{\alpha}^{\dagger}$) should act on the vacuum state. If one were to similarly assume on physical grounds that $Q_{\alpha}|0\rangle=0$ (and $Q_{\alpha}^{\dagger}|0\rangle=0$), then this would imply:
\begin{align}
\langle \delta \varphi \rangle = 0, \hspace{5mm} \forall \varphi \in \mathcal{F}
\end{align} 
and hence any physical supersymmetric QFT would have to have unbroken supersymmetry. The problem with this criterion is that unlike Poincar\'{e} symmetry, SSB plays a particularly important role in the characterisation of physically realistic supersymmetric theories. The reason for this is that the supersymmetry algebra implies that every known particle must have a corresponding supersymmetric partner, with equal mass~\cite{Weinberg00,West86}. Since these additional particles have never been observed, it is concluded that supersymmetry must be spontaneously broken~\cite{Weinberg00,West86}, and so this rules out $Q_{\alpha}|0\rangle=0$ as a general physical criterion. So if such a criterion did exist then it would necessarily have to imply that $Q_{\alpha}|0\rangle$ is a non-vanishing state. But since there is no clear physical principle as to what state $Q_{\alpha}|0\rangle$ should be, it follows from Corollaries~\ref{nm_corr1} and~\ref{Q_vacuum} that: \\
\begin{corr}
The supersymmetric charge $Q_{\alpha}$ is non-unique
\label{susy_char_cor}
\end{corr}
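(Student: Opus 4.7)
The plan is to combine Corollaries~\ref{nm_corr1} and~\ref{Q_vacuum} with the observation, developed in the preceding paragraph, that supersymmetry lacks the physical input used in the Poincar\'{e} case. First I would invoke Corollary~\ref{nm_corr1} for the non-manifest symmetry generated by $S^{\mu}_{\alpha}$: since the supersymmetric current is defined only up to an improvement term $\partial_{\nu}\widetilde{S}^{[\mu\nu]}_{\alpha}$, the corresponding charges are related by $\widetilde{Q}_{\alpha} = Q_{\alpha} + \int d^{3}x\ \partial_{i}B^{i}_{\alpha}$ for some field $B^{i}_{\alpha}$, and are therefore a priori distinct operators.

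Next, I would translate the uniqueness problem into a statement about the vacuum via Corollary~\ref{Q_vacuum}: two such charges coincide if and only if they agree on $|0\rangle$. Thus, in complete analogy with the argument used for $P^{\mu}$ and $M^{\mu\nu}$ in sections~\ref{em_nm} and the Lorentz subsection, singling out a unique $Q_{\alpha}$ amounts to fixing the state $Q_{\alpha}|0\rangle$ by some physical principle.

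The heart of the proof is then the argument that no such principle exists for supersymmetry. The natural candidate $Q_{\alpha}|0\rangle = 0$ (and $Q_{\alpha}^{\dagger}|0\rangle = 0$) would imply $\langle \delta \varphi\rangle = 0$ for all $\varphi \in \mathcal{F}$ and hence, by Theorem~\ref{ssb_cond}, unbroken supersymmetry; this is already ruled out on phenomenological grounds, as recalled just above the corollary. Any alternative criterion would need to pick out a definite nonzero state $Q_{\alpha}|0\rangle$, but no analog of the defining role played by translation and Lorentz invariance of the vacuum is available in the supersymmetric case. Applying Corollary~\ref{Q_vacuum} in the contrapositive then yields the existence of distinct charges $\widetilde{Q}_{\alpha} \neq Q_{\alpha}$ related by an improvement transformation, which is exactly the claim.

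The main obstacle is the last step, which is conceptual rather than computational: justifying rigorously that \emph{no} physical requirement on $Q_{\alpha}|0\rangle$ is available. The mathematical scaffolding from Corollaries~\ref{nm_corr1} and~\ref{Q_vacuum} is immediate, but the non-uniqueness only follows once one has excluded both $Q_{\alpha}|0\rangle = 0$ (by SSB) and any other canonical assignment of $Q_{\alpha}|0\rangle$ (by absence of an analog to the Wightman-type uniqueness of the translationally invariant vacuum). I would therefore devote most of the exposition to contrasting the supersymmetric situation with that of $P^{\mu}$ and $M^{\mu\nu}$, making explicit that the failure of uniqueness here is not a mathematical accident but a reflection of missing physical input.
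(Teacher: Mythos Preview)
Your proposal is correct and follows essentially the same route as the paper: it invokes Corollary~\ref{nm_corr1} to get a priori non-uniqueness, reduces the question to the action on $|0\rangle$ via Corollary~\ref{Q_vacuum}, rules out $Q_{\alpha}|0\rangle=0$ on phenomenological SSB grounds, and observes that no alternative physical criterion fixes $Q_{\alpha}|0\rangle$. Your explicit acknowledgement that the last step is a conceptual/physical assertion rather than a mathematical one is accurate and matches the paper's own presentation.
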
 
\ \\
\noindent
Due to Theorem~\ref{theorem_op_gen} and Corollary~\ref{SSB_inv}, the ambiguity in the definition of $Q_{\alpha}$ does not affect the generation of supersymmetric transformations, nor the determination of whether the supersymmetry is spontaneous broken or not. Nevertheless, because the structure of $Q_{\alpha}$ is not fixed, the action of $Q_{\alpha}$ on states is not uniquely defined. Ultimately, this means that the supersymmetric space of states cannot be constructed in a consistent manner, and this therefore provides a non-perturbative obstacle to the consistency of supersymmetric QFTs.

\section{Conclusions}
\label{section4}

Non-manifest symmetries play an important role in QFT, and yet the quantisation of these symmetries is still not fully understood. Although it well known that the ambiguity in the definition of the conserved currents associated with these symmetries provides the freedom to define different charges, often classical arguments are incorrectly employed to justify that these charge operators are physically equivalent. The central issue in this regard is to determine the conditions under which spatial boundary operators vanish. It turns out that for locally quantised theories, there in fact exists both a necessary and sufficient condition for when this class of operators vanishes. By applying this condition it follows that the charge operators of non-manifest symmetries are non-unique, but different expressions for the charge operator still generate the same symmetry transformations. In the context of SSB, these results ensure that in spite of the non-uniqueness of the charge, the criterion for SSB is not affected by this ambiguity. Nevertheless, the charge non-uniqueness is still potentially problematic because it remains unclear as to how the charge operator acts on states. A prominent example of this is the definition of the energy-momentum $P^{\mu}$ and Lorentz charges $M^{\mu\nu}$ associated with the non-manifest symmetries of spacetime translation and Lorentz invariance. In each case, both canonical and Belinfante charges can be defined, but it is unclear which of these operators (if either) is more physically relevant. A solution to this problem is to use the physical assumption that the vacuum is the unique translationally invariant state, because it follows that a knowledge of how these operators act on the vacuum is enough to uniquely define them. However, in the case of supersymmetry, there is no such physical requirement as to how the supersymmetric charge $Q_{\alpha}$ should act on the vacuum. So by contrast to $P^{\mu}$ and $M^{\mu\nu}$, $Q_{\alpha}$ is \textit{not} uniquely defined, and this therefore introduces a non-perturbative obstacle to the consistency of supersymmetric QFTs.

\section*{Acknowledgments}
I thank Thomas Gehrmann for useful discussions and input. This work was supported by the Swiss National Science Foundation (SNF) under contract CRSII2\_141847.

\end{document}